\newcounter{hints}
\renewcommand{\thehints}{\roman{hints}}
\newcommand{\hintedrel}[2][]{%
  \refstepcounter{hints}%
  \if\relax\detokenize{#1}\relax\else\label{#1}\fi
  \mathrel{\overset{\mathrm{(\thehints)}}{\vphantom{\le}{#2}}}%
}
\newcommand\NB[1][0.3]{N\kern-#1em\textcolor{red}{B}}
\DeclareMathOperator*{\given}{\vert}
\newcommand{\E}[1]{\mathsf{E}\left[#1\right]}
\newcommand{\Var}[1]{\mathsf{Var}\left(#1\right)}
\newcommand{\mmse}[1]{\mathsf{mmse}\left(#1\right)}
\newcommand{\loge}[1]{\log\left(#1\right)}
\newcommand{\e}[1]{e^{#1}}
\newtheoremstyle{note}% name
  {}%      Space above
  {}%      Space below
  {\itshape}%         Body font
  {}%         Indent amount (empty = no indent, \parindent = para indent)
  {\itshape}% Thm head font
  {}%        Punctuation after thm head
  {0em}%     Space after thm head: " " = normal interword space;
\theoremstyle{note} 
\newtheorem{theorem}{Theorem}
\newtheorem{corollary}[theorem]{Corollary}
\newtheorem{lemma}[theorem]{Lemma}
\newtheorem{proposition}[theorem]{Proposition}
\newtheorem{remark}[theorem]{Remark}
\newcounter{relctr} %% <- counter for relations
\begin{document}
\title{ The Price of Distributed:\\ Rate Loss in the CEO Problem} 

% %%% Single author, or several authors with same affiliation:
% \author{%
%   \IEEEauthorblockN{Stefan M.~Moser}
%   \IEEEauthorblockA{ETH Zürich\\
%                     ISI (D-ITET)\\
%                     CH-8092 Zürich, Switzerland\\
%                     Email: moser@isi.ee.ethz.ch}
% }

%%% Several authors with up to three affiliations:

%%% Many authors with many affiliations:
 \author{%
   \IEEEauthorblockN{Arda Atalik\IEEEauthorrefmark{1},
                      Alper Köse\IEEEauthorrefmark{2}
                     and Michael Gastpar\IEEEauthorrefmark{3},
                    }
   \IEEEauthorblockA{\IEEEauthorrefmark{1}%
                     Department of Electrical and Electronics Engineering,
                     Bilkent University,
                     arda.atalik@bilkent.edu.tr
                     }
   \IEEEauthorblockA{\IEEEauthorrefmark{2}%
                     Department of Electrical and Electronics Engineering,
                     Bogazici University,
                     alper.kose@boun.edu.tr
                     }
   \IEEEauthorblockA{\IEEEauthorrefmark{3}%
                     School of Computer and Communication Sciences, 
                     EPFL,
                     michael.gastpar@epfl.ch}
   %\IEEEauthorblockA{\IEEEauthorrefmark{4}%
   %                  Huawei R\&D Center Istanbul
%                     }
 }

\maketitle

%% Abstract: 
%%
\begin{abstract}
In the distributed remote (CEO) source coding problem, many separate encoders observe independently noisy copies
of an underlying source. The rate loss is the difference between the rate required in this distributed setting and the rate that would
be required in a setting where the encoders can fully cooperate. In this sense, the rate loss characterizes the
price of distributed processing. We survey and extend the known results on the rate loss in various settings,
with a particular emphasis on the case where the noise in the observations is Gaussian, but the underlying source is general.
\end{abstract}
\begin{IEEEkeywords}
CEO problem, rate loss, remote source coding problem, Gaussian noise.
\end{IEEEkeywords}

\section{Introduction and Motivation}
The distributed remote (CEO) source coding problem has been presented in \cite{berger1996ceo} with the motivation to investigate the limits of a decentralized estimation task and attracted researchers' interest. In the CEO problem, the data sequence, source, cannot be observed directly and the decoder only observes the rate-limited noisy versions of the original sequence. The decoder produces an approximation of the underlying source exploiting these observations. This is illustrated in Figure~\ref{fig-CEO}.

In \cite{berger1996ceo}, Berger {\it et al.} investigate the asymptotic behavior of the minimal error frequency as the number of agents and the total data rate of encoders go to infinity where the source and observations are assumed to be discrete and memoryless. The case, in which the source is assumed to be Gaussian distributed and the observation noise is again Gaussian and the fidelity criterion being mean-square error (MSE), is referred as the quadratic Gaussian CEO problem and has been investigated in various works as \cite{viswanathan1997quadratic,oohama1998rate,prabhakaran2004rate,tavildar2009gaussian}. 
\par Eswaran and Gastpar consider the observation noise as additive Gaussian, however they allow the underlying source to be any continuous distribution with a constraint of having a finite differential entropy in \cite{eswaran_gastpar_2019}. Likewise, in this paper, the source is allowed to be arbitrarily distributed ensuring that it has a finite variance and differential entropy.

\begin{figure}
  \begin{center}
  \setlength{\unitlength}{1.5pt}
  \begin{picture}(140,130)(130,0)
%\thicklines
    \put (127, 60) {\vector (1, 0) {13} }
    \put (130, 63) {\makebox (10,10) {$X^n$}}

    \put (142, 10) {\line (0, 1) {105} }

    \put (142, 115) {\vector (1, 0) {7} }
    \put (163, 115) {\line (1, 0) {18} }
    \put (156, 115) {\circle {10} }
    \put (153.5, 113.5) {$+$ }
    %\put (155, 110) {\line (0, 1) {10} }
    \put (156, 130) {\vector (0, -1) {8} }
    \put (158, 122) {\makebox (10,10) {$W_1^n$}}

    \put (142, 80) {\vector (1, 0) {7} }
    \put (163, 80) {\line (1, 0) {18} }
    \put (156, 80) {\circle {10} }
    \put (153.5, 78.5) {$+$ }
    %\put (155, 75) {\line (0, 1) {10} }
    \put (156, 95) {\vector (0, -1) {8} }
    \put (158, 87) {\makebox (10,10) {$W_2^n$}}

    \put (142, 10) {\vector (1, 0) {7} }
    \put (163, 10) {\line (1, 0) {18} }
    \put (156, 10) {\circle {10} }
    \put (153.5, 8.5) {$+$ }
    %\put (155, 5) {\line (0, 1) {10} }
    \put (156, 25) {\vector (0, -1) {8} }
    \put (158, 17) {\makebox (10,10) {$W_M^n$}}

    \put (170, 115) {\vector (1, 0) {17} }
    \put (175, 118) {\makebox (10,10) {$Y_1^n$}}
    \put (189, 105) {\framebox (25,20) {\sc enc 1} }
    \put (216, 115) {\line (1, 0) {3} }
    \put (221, 115) {\line (1, 0) {3} }
    \put (226, 115) {\line (1, 0) {3} }
    \put (228, 115) {\vector (1, 0) {3} }
    \put (214, 118) {\makebox (19,10) {$nR_1$}}

    \put (170, 80) {\vector (1, 0) {17} }
    \put (175, 83) {\makebox (10,10) {$Y_2^n$}}
    \put (189, 70) {\framebox (25,20) {\sc enc 2} }
    \put (216, 80) {\line (1, 0) {3} }
    \put (221, 80) {\line (1, 0) {3} }
    \put (226, 80) {\line (1, 0) {3} }
    \put (228, 80) {\vector (1, 0) {3} }
    \put (214, 83) {\makebox (19,10) {$nR_2$}}

    \put (200, 50) {\makebox (10,10) {$\vdots$}}

    \put (170, 10) {\vector (1, 0) {17} }
    \put (175, 13) {\makebox (10,10) {$Y_M^n$}}
     \put (189, 0) {\framebox (25,20) {\sc enc $M$} }
    \put (216, 10) {\line (1, 0) {3} }
    \put (221, 10) {\line (1, 0) {3} }
    \put (226, 10) {\line (1, 0) {3} }
    \put (228, 10) {\vector (1, 0) {3} }
    \put (214, 13) {\makebox (19,10) {$nR_M$}}

    \put (233, 0) {\framebox (20,125) {\sc dec} }
    \put (255, 60) {\vector (1, 0) {15} }
    \put (258, 63) {\makebox (10,10) {${\hat X}^n$}}
  \end{picture}
  \end{center}
\caption{The $M$-agent CEO problem. The {\it rate loss} is the difference in compression rate required in the distributed setting illustrated here versus the rate that would be required if all $M$ encoders could cooperate fully. It characterizes the price of distributed processing.}
\label{fig-CEO}
\end{figure}
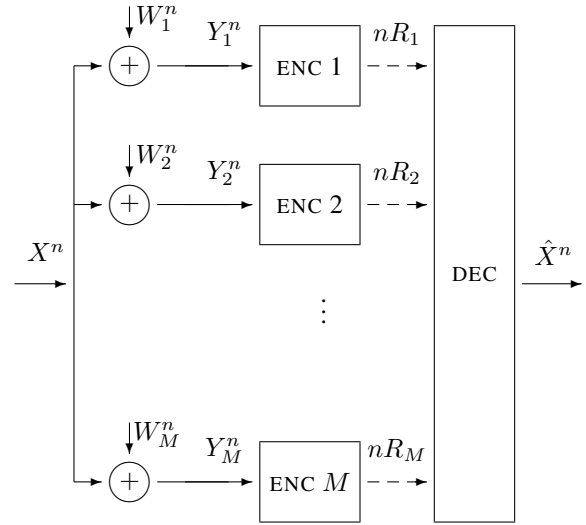
\subsection{Contribution and Outline}
In Section~\ref{exactrateloss}, we revisit the well known rate loss for the special case where the underlying source is Gaussian. We explicitly explore the limiting regimes. The main contributions of the paper are in Section~\ref{mainresultsection}:
\begin{itemize}
\item In Section~\ref{sec-remote}, we compare two known lower bounds of the rate distortion function in the remote source coding problem.
\item In Section~\ref{novelratelossbounds}, we provide novel rate loss bounds for the $M-$agent AWGN CEO problem and compare with the previous bounds analytically and numerically.
\item In Section~\ref{Sec:asymptoticrateloss}, we establish the asymptotic behavior of the rate loss bounds. For example, we consider the case where the number of agents $M$ becomes large: In this regime, the rate loss typically scales {\it linearly} in the number of agents $M.$ We also consider the case where the distortion tends to the minimum possible distortion: In this regime, the rate loss typically scales like $\log\frac{1}{\delta},$ where $\delta$ is the gap from the minimum distortion.
\item In Section~\ref{sec:comparison}, we present comparisons between the various bounds proposed in this paper as well as bounds that appear in previous work. We also show a number of numerical evaluations for concrete cases, such as when the underlying source distribution is Uniform or Laplacian.
\item Section~\ref{conclusionsection} concludes the paper and presents some future directions. 
\end{itemize}

\subsection{Notation}
We use uppercase letters $X,\,Y$ to denote random variables, and lowercase letters $x,\,y$ to denote their realizations. Given a square-integrable, absolutely continuous random variable $X$ with density $p_X(x)$, its variance $\Var{X}$ is denoted as $\sigma_X^2$, and its differential entropy is 
\begin{equation}
    h(X) = -\int p_X(x)\,\log p_X(x)\,dx.
\end{equation} 
The entropy power of $X$ is $N(X) = \dfrac{\e{2h(X)}}{2\,\pi\,e}$,\vspace*{.125cm} mutual information is  $I(X;Y) = h(Y)-h(Y\given X)=h(X)-h(X\given Y)$ and its Fisher information is 
$J(X) = \int p_X(x)\,\left(\frac{d}{dx}\,\log p_X(x)\right)^2 dx$, see~\cite[p.671]{CoverThomas06}.\vspace*{.1cm}
Denote the conditional expectation of $X$ given $Y$ as 
\begin{equation}
    V = \E{X\given Y}
\end{equation}
and its corresponding mean-square error as 
\begin{equation}
    \mmse{X\given Y} = \E{\Var{X\given Y}} = \E{(X-V)^2}.
\end{equation}
We denote the asymptotic equivalence of $f(x)$ and $g(x)$ around $x=x_0$ by $f(x)\sim g(x)$. That is, $\lim_{x\rightarrow x_0} \frac{f(x)}{g(x)}=1$.
\par In this paper, our primary focus concerns the additive Gaussian noise model, 
\begin{equation}
    Y = X+W,
    \label{themodel}
\end{equation}
where $W$ is a zero-mean Gaussian random variable of variance $\sigma_W^2,$ independent of the signal $X,$
and where $X$ has an arbitrary distribution. This is illustrated pictorially in Figure~\ref{observation}.

\begin{figure}
\begin{center}
\begin{tikzpicture}[scale=2,shorten >=1pt, auto, node distance=1cm,
   node_style/.style={scale=1,circle,draw=black,thick},
   edge_style/.style={draw=black,dashed}]

    \node [fill=none] at (-1,0) (nodeS) {$X$};
    \node [fill=none] at (0,-0.65) (nodeS) {$W \sim \mathcal{N}(0,\sigma_W^2)$};
    \node [fill=none] at (1.3,0) (nodeS) {$Y = X + W$};
    \node[node_style] (v1) at (0,0) {$+$};
   
    \draw [-stealth](-0.9,0) -- (-0.2,0);
    \draw [-stealth](0,-0.5) -- (0,-0.2);
    \draw [-stealth](0.22,0) -- (0.8,0);
    \end{tikzpicture}
\end{center}
\caption{The additive Gaussian noise observation model}
\label{observation}
\end{figure}
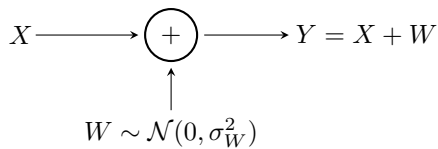

\subsection{Differential Entropy of the Conditional Mean}
The probability density function of the conditional mean is calculated explicitly in \cite{dytso_poor_shitz_2020}, but in general, it may  not be easy to calculate for arbitrary input distributions. The following identity is useful as it omits the calculation of the density of $\E{X\given Y}$.   
Our theorem is based on the following lemma, which relates the differential entropy of the conditional expectation to that of the output. 
\begin{theorem}\label{theorem-main}.
For the model given in Equation~\eqref{themodel} with $\sigma_W^2 >0,$ the differential entropy of the conditional mean can be written as 
\begin{equation}
    h(\E{X\given Y}) = h(Y) + \E{\loge{\frac{1}{\sigma_W^2}\Var{X\given Y}}}.
    \label{hV}
\end{equation}
Furthermore, we have the following lower bound:
\begin{equation}
    h(\E{X\given Y})+h(Y)\geq 2h(X). \label{mainresult}
\end{equation}
\end{theorem}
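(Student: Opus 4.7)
The plan is to prove the two parts in sequence, with the identity \eqref{hV} doing the heavy lifting for the lower bound \eqref{mainresult}.

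For the identity \eqref{hV}, I would start by writing $V=g(Y)$ where $g(y)=\E{X\given Y=y}$, and compute $g'(y)$ directly. Writing $g(y)=\frac{\int x\,p_X(x)\,\phi_{\sigma_W}(y-x)\,dx}{p_Y(y)}$ with $\phi_{\sigma_W}$ the zero-mean Gaussian density of variance $\sigma_W^2$, and using the key Gaussian-noise property $\phi'_{\sigma_W}(u)=-\tfrac{u}{\sigma_W^2}\phi_{\sigma_W}(u)$, a quotient-rule calculation yields the Brown/Tweedie identity
\begin{equation}
    g'(y) \;=\; \frac{1}{\sigma_W^2}\,\Var{X\given Y=y}.
\end{equation}
In particular $g'\ge 0$, so $g$ is monotone nondecreasing (strictly so, unless $X$ is degenerate given $Y=y$ on a set of positive measure, a case that must be ruled out by a mild nontriviality assumption). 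The standard change-of-variables formula for differential entropy under a monotone smooth map then gives
\begin{equation}
    h(V) \;=\; h(Y) + \E{\log g'(Y)} \;=\; h(Y) + \E{\log\tfrac{1}{\sigma_W^2}\Var{X\given Y}},
\end{equation}
which is \eqref{hV}.

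For the lower bound \eqref{mainresult}, I would combine \eqref{hV} with two well-known facts. First, since $Y=X+W$ with $W\indep X$ Gaussian, $h(Y\given X)=h(W)=\tfrac{1}{2}\log(2\pi e\,\sigma_W^2)$, which via $I(X;Y)=h(X)-h(X\given Y)=h(Y)-h(Y\given X)$ yields
\begin{equation}
    h(X) \;=\; h(X\given Y) + h(Y) - \tfrac{1}{2}\log(2\pi e\,\sigma_W^2).
\end{equation}
Second, the maximum-entropy principle, applied pointwise in $y$ and then averaged, gives
\begin{equation}
    h(X\given Y) \;\le\; \tfrac{1}{2}\,\E{\log\bigl(2\pi e\,\Var{X\given Y}\bigr)}.
\end{equation}
Doubling the first display, substituting the second, cancelling the $2\pi e$ factors, and recognizing the remaining expectation as $h(V)-h(Y)$ via \eqref{hV}, the inequality $2h(X)\le h(V)+h(Y)$ drops out in one line.

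The only real obstacle is the derivative identity $g'(y)=\sigma_W^{-2}\Var{X\given Y=y}$: one needs to justify differentiating under the integral sign (which follows from dominated convergence given the Gaussian kernel and the finite-variance assumption on $X$) and to handle the degenerate case where $g$ is constant on a set of positive $Y$-measure. Everything else amounts to standard identities.
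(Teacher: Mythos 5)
Your proposal is correct and follows essentially the same route the paper indicates: your quotient-rule computation of $g'(y)=\sigma_W^{-2}\Var{X\given Y=y}$ is precisely a direct derivation of the Tweedie/Hatsell--Nolte identity that the paper cites, and the change-of-variables step plus the conditional maximum-entropy bound recover both \eqref{hV} and \eqref{mainresult}. The paper itself only sketches this and defers the details to \cite{atalik2021differential}, so your write-up is a valid self-contained version of the intended argument.
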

\begin{proof}
This lemma follows by careful application of several known tools, including Tweedie's formula~\cite{robbins2020empirical} and the Hatsell-Nolte identity~\cite{hatsell_nolte_1971}. A full proof is provided in \cite{atalik2021differential}. 
\end{proof}
\section{Exact Rate Loss in the Gaussian Input $M-$Agent AWGN CEO Problem}\label{exactrateloss}
The exact loss for the Gaussian input is well-known \cite{oohama1998rate}.
In this case, the smallest attainable distortion (even with unlimited rates) is well known to be
\begin{align}
    D_{\mathcal{N},0} &= \frac{\sigma_X^2\sigma_W^2}{M\,\sigma_{X}^2+\sigma_W^2}. \label{Eq-MMSE-GaussianSource}
\end{align}
The exact rate loss is given in the following proposition:
\begin{proposition}.\label{GaussianInput}
For the model given in (\ref{fig-CEO}), if the input is Gaussian with variance $\sigma_X^2$, then the exact rate loss $L_{\mathcal{N}}(D)$ for $D>D_{\mathcal{N},0}$ can be written as 
    \begin{align}
    L_\mathcal{N}(D)&=\frac{M-1}{2} \log \left(\frac{D}{\frac{\sigma_{X}^{2}+\sigma_{W}^{2} / M}{\sigma_{X}^{2}} D-\frac{\sigma_{W}^{2}}{M}}\right)\nonumber\\  
    &= \frac{M-1}{2}\log\frac{1}{1-\frac{\sigma_W^2}{M}\left(\frac{1}{D}-\frac{1}{\sigma_X^2}\right)} \label{exactGauss}
    \end{align}
\end{proposition}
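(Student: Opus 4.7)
The plan is to compute the sum rate $R_{\text{CEO}}(D)$ of the distributed setup and the rate $R_{\text{coop}}(D)$ of the cooperative (remote) setup separately for the Gaussian source, and then set $L_\mathcal{N}(D)=R_{\text{CEO}}(D)-R_{\text{coop}}(D)$.

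For the cooperative rate, I would observe that when the $M$ encoders pool their observations, $\bar Y=\tfrac{1}{M}\sum_{i=1}^{M}Y_i$ is a scalar sufficient statistic for $X$, and $V=\E{X\given\bar Y}$ is Gaussian with variance $\sigma_V^2=\sigma_X^2-D_{\mathcal{N},0}$. The classical Wolf--Ziv reduction of the quadratic remote source coding problem to a direct one then gives
\begin{equation*}
R_{\text{coop}}(D) \;=\; \tfrac{1}{2}\log\tfrac{\sigma_X^2-D_{\mathcal{N},0}}{D-D_{\mathcal{N},0}}, \qquad D>D_{\mathcal{N},0}.
\end{equation*}

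For the distributed sum rate, I would invoke the symmetric Gaussian Berger--Tung construction, known to be optimal for the quadratic Gaussian CEO problem (cf.\ \cite{oohama1998rate,prabhakaran2004rate,tavildar2009gaussian}). Taking $U_i=Y_i+N_i$ with iid $N_i\sim\mathcal{N}(0,\eta^2)$ independent of everything else, the distortion constraint $\mmse{X\given U_1,\ldots,U_M}=D$ pins down $\eta^2$, yielding the identities $\sigma_W^2+\eta^2=\tfrac{M\sigma_X^2 D}{\sigma_X^2-D}$ and $M\sigma_X^2+\sigma_W^2+\eta^2=\tfrac{M\sigma_X^4}{\sigma_X^2-D}$. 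The covariance of $(U_1,\ldots,U_M)$ has the rank-one-plus-diagonal form $\sigma_X^2\mathbf{1}\mathbf{1}^\top+(\sigma_W^2+\eta^2)\mathbf{I}$ with determinant $(\sigma_W^2+\eta^2)^{M-1}(M\sigma_X^2+\sigma_W^2+\eta^2)$, and $h(U_1,\ldots,U_M\given Y_1,\ldots,Y_M)=\tfrac{M}{2}\log(2\pi e\,\eta^2)$, so the sum rate $I(Y_1,\ldots,Y_M;U_1,\ldots,U_M)$ reads
\begin{equation*}
R_{\text{CEO}}(D) \;=\; \tfrac{M-1}{2}\log\tfrac{\sigma_W^2+\eta^2}{\eta^2} + \tfrac{1}{2}\log\tfrac{M\sigma_X^2+\sigma_W^2+\eta^2}{\eta^2}.
\end{equation*}

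The finishing step is algebraic. Substituting the two identities above shows the second summand of $R_{\text{CEO}}(D)$ equals $R_{\text{coop}}(D)$, since both reduce to $\tfrac{1}{2}\log\tfrac{M\sigma_X^4}{(M\sigma_X^2+\sigma_W^2)(D-D_{\mathcal{N},0})}$, while the first summand simplifies to $\tfrac{M-1}{2}\log\tfrac{1}{1-(\sigma_W^2/M)(1/D-1/\sigma_X^2)}$ upon observing $\sigma_W^2(\sigma_X^2-D)/(M\sigma_X^2 D)=(\sigma_W^2/M)(1/D-1/\sigma_X^2)$. Subtracting produces the second form in (\ref{exactGauss}); the first form follows by multiplying numerator and denominator inside the logarithm by $\sigma_X^2$. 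The main obstacle is not conceptual but bookkeeping: verifying that the cooperative rate exactly absorbs the $\tfrac{1}{2}\log$ term of the CEO sum rate so that only a clean $(M-1)/2$ prefactor survives on the remaining logarithm.
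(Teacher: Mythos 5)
Your derivation is correct, and it is worth noting that the paper itself offers no proof of Proposition~\ref{GaussianInput} at all: it simply asserts that the exact loss for Gaussian input is ``well-known'' and points to \cite{oohama1998rate}. Your proposal reconstructs that known result explicitly, which is a legitimately different (and more informative) presentation. The two halves check out: the Wolf--Ziv reduction gives $R_{\text{coop}}(D)=\frac{1}{2}\log\frac{\sigma_X^2-D_{\mathcal{N},0}}{D-D_{\mathcal{N},0}}$ with $\sigma_X^2-D_{\mathcal{N},0}=\frac{M\sigma_X^4}{M\sigma_X^2+\sigma_W^2}$, and your Berger--Tung evaluation with $\sigma_W^2+\eta^2=\frac{M\sigma_X^2 D}{\sigma_X^2-D}$ indeed yields $\frac{\eta^2}{\sigma_W^2+\eta^2}=1-\frac{\sigma_W^2}{M}\bigl(\frac{1}{D}-\frac{1}{\sigma_X^2}\bigr)$, so the $\frac{1}{2}\log$ term of the sum rate exactly cancels against the cooperative rate and the clean $\frac{M-1}{2}$ prefactor survives; the positivity of $\eta^2$ is equivalent to $D>D_{\mathcal{N},0}$, matching the stated validity range (with $D<\sigma_X^2$ implicitly understood). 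The one point you should flag more explicitly is that your argument establishes $R_{\text{CEO}}(D)$ as an \emph{achievable} sum rate only; calling the resulting $L_\mathcal{N}(D)$ the \emph{exact} rate loss requires the matching converse for the quadratic Gaussian CEO problem, which is precisely the nontrivial content of \cite{oohama1998rate} (and \cite{prabhakaran2004rate,tavildar2009gaussian}) that you, like the paper, are importing by citation rather than proving. With that dependency acknowledged, the proof is complete and the bookkeeping is right.
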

From the expression, several important observations could be made on how the rate loss scales.
\subsection{Fixed $D$, large $M$}
As we expand (\ref{exactGauss}) at $M=\infty$, we obtain
\begin{equation}
  L_{\mathcal{N}}(D) = L_{\mathcal{N}}^{\infty}(D)+L_{\mathcal{N}}^{\infty}(D)\,\left(L_{\mathcal{N}}^{\infty}(D)-1\right)\frac{1}{M}+O\left(\frac{1}{M^2}\right) \label{gaussianratelossinfinity}
\end{equation}
where $L_{\mathcal{N}}^{\infty}(D) = \frac{1}{2} \sigma _W^2 \left(\frac{1}{D}-\frac{1}{\sigma _X^2}\right)$ is the rate loss as $M$ approaches infinity. Hence, for large $M$, the rate loss is inversely proportional with $D$.
\subsection{$D=\beta\,M^{-\alpha}$ for $0<\alpha\leq
1$, large $M$}
We assume $\beta$ is chosen such that $D>D_{\mathcal{N},0}$ (see Eqn.~\eqref{Eq-MMSE-GaussianSource}). In this case, the rate loss is asymptotically equivalent to $\gamma M^\alpha$. That is, 
\begin{equation}
    L_{\mathcal{N}}(D) \sim \gamma  M^\alpha \label{asympequivalence}
\end{equation} for large $M$ where 
\begin{equation}
    \gamma = \begin{cases}
    \frac{\sigma _W^2}{2\beta} & \text{ if } 0<\alpha<1\\
    \frac{1}{2} \log \left(\frac{\beta }{\beta -\sigma _W^2}\right) & \text{ if } \alpha=1, \beta\neq \sigma_W^2
    \end{cases}.
\end{equation} 
\subsection{Fixed $M$, $D=D_{\mathcal{N},0}+\delta$ for small $\delta$}
As we expand (\ref{exactGauss}) around the smallest possible distortion $D_{\mathcal{N},0}$ as given in Eqn.~\eqref{Eq-MMSE-GaussianSource}, {\it i.e.,} $\delta=0^+$, we obtain 
\begin{multline}
    L_{\mathcal{N}}(\delta) = \frac{1}{2} (M-1)\log\frac{1}{\delta}+\frac{1}{2} (M-1) \log \left(\frac{M \sigma _W^2 \sigma _X^4}{\left(M \sigma _x^2+\sigma _W^2\right){}^2}\right)\\+\frac{\delta  (M-1) \left(M \sigma _X^2+\sigma _W^2\right)}{2 \sigma _W^2 \sigma _X^2}+O\left(\delta ^2\right) \label{Gaussratelosscase3}
\end{multline}
Hence, the rate loss scales with $\log\frac{1}{\delta}$. 
In the next section, we analyze the same three cases and show that similar scaling behaviours are observed when the input is non-Gaussian. 
\section{Main Results}\label{mainresultsection}
\subsection{Lower Bounds of the Rate Distortion Function in the Remote Source Coding Problem}\label{sec-remote}
\label{mainappsection}
\begin{figure}[H]
\begin{center}
\small
\begin{tikzpicture}[scale=2,shorten >=1pt, auto, node distance=1cm,
   node_style/.style={scale=1,circle,draw=black,thick},
   edge_style/.style={draw=black,dashed}]

    \node [fill=none] at (-.4,0.15) (nodeS) {$X^n$};
    \node [fill=none] at (0,-0.65) (nodeS) {$W^n \sim \mathcal{N}(0,\sigma_W^2\,I_n)$};
    \node [fill=none] at (0.5,0.15) (nodeS) {$Y^n$};
    \node[rectangle,draw,minimum width = 1cm, 
    minimum height = .8cm] (r) at (1.1,0) {ENC};
    \node[node_style] (v1) at (0,0) {$+$};
    \node [fill=none] at (1.6,0.15) (nodeS) {$nR$};
    \node[rectangle,draw,minimum width = 1cm, 
    minimum height = .8cm] (r) at (2.15,0) {DEC};
    \node [fill=none] at (2.6,0.15) (nodeS) {$\hat{X}^n$};
   
    \draw [-stealth](-0.7,0) -- (-0.2,0);
    \draw [-stealth](0,-0.5) -- (0,-0.2);
    \draw [-stealth](0.22,0) -- (0.8,0);
    \draw [dashed,-stealth](1.4,0) -- (1.85,0);
    \draw [-stealth](2.45,0) -- (2.9,0);
    \end{tikzpicture}
\end{center}
  \caption{The AWGN remote source coding problem.}\label{Fig-AWGNremoteSC}
\end{figure}
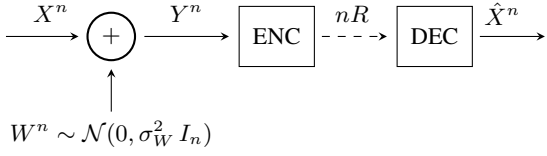

An important application of Inequality (\ref{mainresult}) can be found in the remote source coding problem.
Specifically, consider the source coding problem illustrated in Figure~\ref{Fig-AWGNremoteSC}: An encoder observes the underlying source $X$ subject to additive white Gaussian noise $W.$ The noisy observation is $Y$ and can be encoded using $R$ bits per sample. The decoder produces a reconstruction $\hat{X}$ to within the smallest possible mean-squared error. For a formal problem statement, we refer to~\cite{eswaran_gastpar_2019}.
The smallest possible rate to attain a target distortion $D$ is referred to as the {\it remote rate-distortion function,} denoted as $R_X^R(D).$
For the case where the underlying source $X$, not necessarily Gaussian, has finite differential entropy,~\cite{eswaran_gastpar_2019} discusses two different lower bounds for the remote rate-distortion function, namely
\begin{IEEEeqnarray}{rCl}
R_X^R(D) &\geq& \frac{1}{2}\log^{+} \frac{N(V)}{D} \nonumber \\
&& {}+\frac{1}{2}\log^{+} \frac{N(Y)}{N(Y)-\frac{N(X)}{D}N(W)} \label{lb1}
\end{IEEEeqnarray}
and
\begin{IEEEeqnarray}{rCl}
%&&\hspace*{-2.875cm}\text{and} \nonumber \\
R_X^R(D)&\geq& \frac{1}{2}\log^{+} \frac{N(X)}{D}\nonumber\\&& {}+ \frac{1}{2}\log^{+} \frac{N(X)}{N(Y)-\frac{N(X)}{D}\sigma_W^2}
\label{lb2}
\end{IEEEeqnarray}
where $D>\E{(X-V)^2}$ and $\log^{+}x=\max\{0,\,\log x\}$. In \cite{atalik2021differential}, it is shown that (\ref{lb1}) is a tighter bound than (\ref{lb2}). Thus, in the subsequent sections, we mainly use (\ref{lb1}).  
\subsection{Novel Rate-Loss Bounds for The CEO Problem}
\label{novelratelossbounds}
In this section, we apply Theorem~\ref{theorem-main} to the so-called CEO problem.
In this problem, a single underlying source $X$ is observed by $M$ encoders. Each encoder receives a noisy version of the source $X,$ denoted as $Y_i=X+W_i,$ for $i=1, 2, \ldots, M.$ In our consideration, the noises $W_i$ are assumed to be zero-mean Gaussian, independent of each other, and of variance $\sigma_W^2.$
Each encoder compresses its observation using $R_i$ bits. All $M$ compressed representations are given to a single central decoder whose goal is to produce a reconstruction of the underlying source $X$ to with mean-squared error $D.$ The smallest possible sum-rate required to attain a distortion $D$ is denoted by $R_X^{\mathrm{CEO}}(D).$
We precisely follow the exact problem statement and notation used in~\cite{eswaran_gastpar_2019}.

The {\it rate loss} in the CEO problem denotes the difference between $R_X^{\mathrm{CEO}}(D)$ and the much smaller rate that would be required if all encoders were to cooperate fully, {\it i.e.,} the rate required by a single encoder having access to all $M$ noisy source observations.
Evidently, if the encoders are allowed to cooperate fully, then the problem is exactly the remote rate-distortion problem discussed in Section~\ref{sec-remote} above, but with reduced variance $\sigma_W^2/M$.
We denote the corresponding rate by $R_{X}^{\mathrm{R}}(D),$ and the rate loss by
\begin{equation}
    L(D) \triangleq R_X^{\mathrm{CEO}}(D)-R_{X}^{\mathrm{R}}(D).
\end{equation} 
In this section, we establish a novel bound on this rate loss.

To develop our results we will use the auxiliary notations
\begin{align}
    Y(M) &= \frac{1}{M} \sum_{i=1}^M Y_i \label{eq-defn-YM}
\end{align}
and $V(M) = \E{X\given Y(M)}.$

A lower bound for $L(D)$ is presented in~\cite{eswaran_gastpar_2019}.
%with a different parametrization. To combine with our upper bound, we change the parametrization. 
%\begin{lemma}. {A Lower Bound for the Rate Loss \cite{eswaran_gastpar_2019}}
For
\begin{align}
\sigma_X^2\dfrac{\sfrac{\sigma_W^2}{M}}{\sigma_X^2+\sfrac{\sigma_W^2}{M}}<D<N(X)\dfrac{\sfrac{\sigma_W^2}{M}}{N(Y(M))-N(X)},
\end{align}
the lower bound on the rate loss establishes that
\begin{multline}
    L(D)\geq \frac{M}{2}\loge {\frac{1}{\frac{N(Y(M))}{N(X)}-\frac{\sigma_W^2}{M}\frac{1}{D}}}\\
    -\frac{1}{2}\loge{\frac{\sigma_X^2}{N(X)}\,\frac{1}{1-\frac{\sigma_W^2}{M}\left(\frac{1}{D}-\frac{1}{\sigma_X^2}\right)}} \label{lowerbnd1}
\end{multline}
For $M\uparrow \infty$ and under some regularity conditions further discussed in \ref{kappaxlemma},
the bound becomes, for $0<D<\frac{1}{J(X)}$
\begin{equation}
    L(D) \geq  \frac{\sigma_W^2}{2} \left(\frac{1}{D}-{J(X)}\right)-\frac{1}{2}\log \frac{\sigma_X^2}{N(X)}.\label{Eqn-rateloss-lowerbound}
\end{equation}
The novel bound presented here is an {\it upper} bound on the rate loss, developed in the following subsections.

\subsubsection{Cooperation Bound}
The first ingredient of the novel upper bound on the rate loss is an improved lower bound on $R_{X}^{\mathrm{R}}(D).$
 To this end, we will utilize both $N(V(M))$ and $\mmse{X\given Y(M) }$, i.e., for all $D> \mmse{X\given Y(M)}$
\begin{align}
    R_{X}^{\mathrm{R}}(D) &\geq \frac{1}{2}\log^+ \frac{N(V(M))}{D-\mmse{X\given Y(M)}} \label{tight}
\end{align}

One can weaken (\ref{tight}) to omit the calculation of $\mmse{X\given Y(M)}$. In that case, one obtains for all $D>N(X)\,\sigma_W^2/\left(M\,N(Y(M))\right)$,
\begin{IEEEeqnarray}{rCl}
    R_{X}^{\mathrm{R}}(D) &\geq& \frac{1}{2}\log^{+}\frac{N(V(M))}{D}\\
    && {}+\frac{1}{2}\log^{+}\frac{M\, N\left(Y(M)\right)}{M\, N\left(Y(M)\right)-\frac{N(X)}{D}N(W)}%\\
    %&= \begin{cases}
    %\frac{1}{2} \log \frac{M\, N\left(V(M)\right)\, %N(Y(M))}{M\,D\,N(Y(M))-N(X)N(W)} &\text{ if } D<N(V(M))\\[1.5em]
    %\frac{1}{2}\log\frac{M\, %N\left(Y(M)\right)}{M\,N\left(Y(M)\right)-\frac{N(X)}{D}N(W)} &\text{ %otherwise } \label{rcoop}
    %\end{cases}\\
%    &=& \begin{cases}
%    \frac{1}{2} \log \frac{N(V(M))}{D - \frac{\sigma_W^2}{M} \frac{N(X)}{N(Y(M))}} &\text{ if } D<N(V(M))\\[.5em]
%    \frac{1}{2} \log\left(1-\frac{\sigma_W^2}{M}\, \frac{N(X)}{D\,N(Y(M))}\right)^{-1} %&\text{ otherwise }  \label{rcoop}
%    \end{cases}
\end{IEEEeqnarray}
where $Y(M)$ was defined in Equation~\eqref{eq-defn-YM}, and $V(M) = \E{X\given Y(M)}$. As we have shown in Section~\ref{sec-remote}, this bound is tighter than the other lower bound in \cite{eswaran_gastpar_2019} for any finite\footnote{Observe that as $M\uparrow \infty$, the second term vanishes, and the bound becomes $R_{X}^{\mathrm{R}}(D)\geq \frac{1}{2}\log^+ \frac{N(X)}{D}$ as expected. This is also true for the other lower bound.} $M$.  
\subsubsection{Novel Rate Loss Upper Bound}
In order to upper bound the rate loss $L(D),$ we utilize the upper bound on the CEO sum-rate distortion by Eswaran and Gastpar \cite{eswaran_gastpar_2019}, which states that for $D>D_{\mathcal{N},0}$, 
\begin{IEEEeqnarray}{rCl}
    R_X^{\mathrm{CEO}}(D) &\leq& \frac{1}{2}\log^+\frac{\sigma_X^2}{D}+\frac{M}{2}\log^+ \frac{M\,\sigma_X^2}{M \sigma_{Y(M)}^2-\frac{\sigma_X^2}{D}\sigma_W^2}\\
    &=& \frac{1}{2}\log^+\frac{\sigma_X^2}{D}+\frac{M}{2}\log^+\frac{1}{1+\frac{\sigma_W^2}{M}\left(\frac{1}{\sigma_X^2}-\frac{1}{D}\right)}\\
    &=& \begin{cases}\frac{1}{2}\log\frac{\sigma_X^2}{D}+\frac{M}{2}\log\frac{1}{1+\frac{\sigma_W^2}{M}\left(\frac{1}{\sigma_X^2}-\frac{1}{D}\right)} & \small\text{if } \scriptstyle{D<\sigma_X^2}\\
    0 & \small\text{otherwise} \label{rceo}
    \end{cases}.
\end{IEEEeqnarray}

One can use (\ref{tight}) to obtain a tight upper bound on the rate loss. 
\begin{theorem}.\label{thmtight}
For $D_{\mathcal{N},0}<D<\mmse{X\given Y(M)}+N(V(M))$, the rate loss is upper bounded as 
\begin{IEEEeqnarray}{rCl}
    L(D) &\leq& \frac{1}{2}\log \frac{\sigma_X^2}{N(V(M))}+\frac{M}{2}\log\frac{1}{1-\frac{\sigma_W^2}{M}\left(\frac{1}{D}-\frac{1}{\sigma_X^2}\right)}\nonumber\\
     && {}-\frac{1}{2}\log \left(\frac{D}{D-\mmse{X\given Y(M)}}\right)\nonumber\\&=&
     {}\frac{M}{M-1}L_{\mathcal{N}}(D)+\frac{1}{2}\log \frac{\sigma_X^2}{N(V(M))}\nonumber\\&&{}-\frac{1}{2}\log \left(\frac{D}{D-\mmse{X\given Y(M)}}\right)
     \label{tightub}
\end{IEEEeqnarray}
\end{theorem}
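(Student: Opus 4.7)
The plan is to derive the stated upper bound by directly combining the Eswaran--Gastpar upper bound on $R_X^{\mathrm{CEO}}(D)$ in \eqref{rceo} with the sharper lower bound on $R_X^{\mathrm{R}}(D)$ in \eqref{tight}. Writing $L(D) = R_X^{\mathrm{CEO}}(D) - R_X^{\mathrm{R}}(D)$, I would apply \eqref{rceo} to the first term and \eqref{tight} to the second, which immediately yields
\[
L(D) \leq \frac{1}{2}\log\frac{\sigma_X^2}{D} + \frac{M}{2}\log\frac{1}{1+\frac{\sigma_W^2}{M}(1/\sigma_X^2-1/D)} - \frac{1}{2}\log\frac{N(V(M))}{D-\mmse{X\given Y(M)}}.
\]

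Next I would carry out the algebraic rearrangement. Combining the first and last terms gives
\[
\frac{1}{2}\log\frac{\sigma_X^2(D-\mmse{X\given Y(M)})}{D\cdot N(V(M))} = \frac{1}{2}\log\frac{\sigma_X^2}{N(V(M))} - \frac{1}{2}\log\frac{D}{D-\mmse{X\given Y(M)}},
\]
and together with the trivial identity $1 + \tfrac{\sigma_W^2}{M}(1/\sigma_X^2-1/D) = 1 - \tfrac{\sigma_W^2}{M}(1/D-1/\sigma_X^2)$ in the middle term, this reproduces the first line of \eqref{tightub}. For the second equality in \eqref{tightub}, I would simply recognize from Proposition~\ref{GaussianInput} that $\tfrac{M}{2}\log\tfrac{1}{1-\frac{\sigma_W^2}{M}(1/D-1/\sigma_X^2)} = \tfrac{M}{M-1}L_{\mathcal{N}}(D)$.

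Finally, I would verify that both component bounds apply throughout the stated interval $D_{\mathcal{N},0} < D < \mmse{X\given Y(M)} + N(V(M))$. The lower endpoint $D > D_{\mathcal{N},0}$ ensures that $1-\tfrac{\sigma_W^2}{M}(1/D-1/\sigma_X^2) > 0$, which is precisely the condition under which the nontrivial branch of \eqref{rceo} applies and the associated logarithm is finite. The upper endpoint $D < \mmse{X\given Y(M)} + N(V(M))$ ensures $N(V(M))/(D-\mmse{X\given Y(M)}) \geq 1$, so the $\log^+$ in \eqref{tight} can be replaced by $\log$ without loss. The main obstacle is thus not computational but rather this domain check together with the mutual compatibility of the two endpoint constraints; once that is in place, the theorem reduces to the elementary manipulations above.
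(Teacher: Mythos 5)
Your proposal is correct and takes essentially the same route as the paper, which obtains \eqref{tightub} in one line by subtracting \eqref{tight} from \eqref{rceo}. The only detail worth making explicit in your domain check is that the nontrivial branch of \eqref{rceo} requires $D<\sigma_X^2$, which the upper endpoint guarantees because $\mmse{X\given Y(M)}+N(V(M))\leq \mmse{X\given Y(M)}+\sigma_{V(M)}^2=\sigma_X^2$ by the law of total variance.
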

\begin{proof}
Follows directly from subtracting (\ref{tight}) from (\ref{rceo}). For the regions, observe that $\mmse{X\given Y(M)}+N(V(M))\leq \mmse{X\given Y(M)}+\sigma_{V(M)}^2=\sigma_X^2$ by law of total variance.
\end{proof}
\begin{corollary}.\label{cor1}
As $M\uparrow \infty$, the upper bound on the loss becomes
\small
\begin{align}
    L(D) \leq
    \frac{1}{2}\log\frac{\sigma_X^2}{N(X)}+\frac{1}{2} \sigma _W^2 \left(\frac{1}{D}-\frac{1}{\sigma _X^2}\right) \text{ for } 0<D<N(X)\label{newBound}
\end{align}
\end{corollary}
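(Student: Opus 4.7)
The plan is to derive the corollary directly from Theorem~\ref{thmtight} by analyzing the limit $M\uparrow\infty$ term-by-term. Recall that Theorem~\ref{thmtight} gives, in the valid range,
\begin{equation}
L(D)\leq \frac{M}{M-1}L_{\mathcal{N}}(D)+\frac{1}{2}\log\frac{\sigma_X^2}{N(V(M))}-\frac{1}{2}\log\frac{D}{D-\mmse{X\given Y(M)}},
\end{equation}
so the corollary will follow once I show that (i) the Gaussian term tends to the closed form appearing in~\eqref{newBound}, (ii) $N(V(M))\to N(X)$, and (iii) the MMSE correction vanishes. I would address these three convergences in turn, and then verify that the stated validity region $0<D<N(X)$ is indeed the limit of $D_{\mathcal{N},0}<D<\mmse{X\given Y(M)}+N(V(M))$.

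First, I would handle the Gaussian rate loss term. The prefactor satisfies $M/(M-1)\to 1$, and the expansion~\eqref{gaussianratelossinfinity} gives $L_{\mathcal{N}}(D)\to L_{\mathcal{N}}^{\infty}(D)=\tfrac{1}{2}\sigma_W^2(\tfrac{1}{D}-\tfrac{1}{\sigma_X^2})$. This already produces the second summand of the claimed upper bound. Next, the MMSE correction: since $Y(M)=X+\tfrac{1}{M}\sum_{i=1}^M W_i$ and the effective Gaussian noise $\tfrac{1}{M}\sum_i W_i$ has variance $\sigma_W^2/M\to 0$, classical MMSE results give $\mmse{X\given Y(M)}\to 0$. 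Hence $D/(D-\mmse{X\given Y(M)})\to 1$ and the third term of Theorem~\ref{thmtight} drops out in the limit.

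The main obstacle is the entropy-power term $\tfrac{1}{2}\log(\sigma_X^2/N(V(M)))$, which requires $N(V(M))\to N(X)$. Intuitively, as the conditioning observation $Y(M)$ becomes asymptotically noiseless, $V(M)=\E{X\given Y(M)}$ approaches $X$, so the differential entropy of $V(M)$ should approach that of $X$. Making this rigorous is where the same kind of regularity on the source $X$ invoked in Section~\ref{kappaxlemma} enters: one needs continuity of differential entropy (and hence of the entropy power) along the sequence $V(M)$. I would argue this using Theorem~\ref{theorem-main}, which relates $h(V(M))$ to $h(Y(M))$ plus an expectation involving $\Var{X\given Y(M)}$; as $\sigma_W^2/M\to 0$, both $h(Y(M))\to h(X)$ (by entropy continuity under vanishing Gaussian smoothing) and the correction term can be shown to vanish, yielding $h(V(M))\to h(X)$, i.e.\ $N(V(M))\to N(X)$.

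Putting the three limits together produces exactly the right-hand side of~\eqref{newBound}. Finally, for the validity region, as $M\uparrow\infty$ the lower endpoint $D_{\mathcal{N},0}=\sigma_X^2\sigma_W^2/(M\sigma_X^2+\sigma_W^2)$ tends to $0$, and the upper endpoint $\mmse{X\given Y(M)}+N(V(M))$ tends to $0+N(X)=N(X)$ by the convergences established above. Thus the feasible range becomes $0<D<N(X)$, completing the corollary.
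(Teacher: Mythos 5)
Your proposal is correct and follows essentially the same route the paper takes (implicitly) for this corollary: a term-by-term limit of Theorem~\ref{thmtight}, using $\tfrac{M}{M-1}L_{\mathcal{N}}(D)\to L_{\mathcal{N}}^{\infty}(D)$, $\mmse{X\given Y(M)}\to 0$, and $N(V(M))\to N(X)$ under the regularity conditions referenced around~\eqref{kappaxlemma}. The only cosmetic difference is that for the entropy-power term one only needs the one-sided bound $N(V(M))\geq N^2(X)/N(Y(M))$ from~\eqref{mainresult} together with $N(Y(M))\to N(X)$ --- the tool the paper explicitly invokes in its subsequent asymptotic theorem --- rather than the full two-sided continuity of $h(V(M))$ you sketch.
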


\begin{remark}.
    Note that (\ref{tightub}) is minimized for Gaussian inputs since both $N(V(M))$ and $\mmse{X\given Y(M)}$ is maximized in that case. Furthermore, the theorem simplifies to 
    \begin{IEEEeqnarray}{rCl}
        L(D)&\leq& \frac{1}{2} \log \left(1-\frac{\sigma_W^2}{M}\left(\frac{1}{D}-\frac{1}{\sigma_X^2}\right)\right)\nonumber\\
        && {}+\frac{M}{2}\log\frac{1}{1-\frac{\sigma_W^2}{M}\left(\frac{1}{D}-\frac{1}{\sigma_X^2}\right)}\\
        &=& \frac{M-1}{2}\log\frac{1}{1-\frac{\sigma_W^2}{M}\left(\frac{1}{D}-\frac{1}{\sigma_X^2}\right)}
    \end{IEEEeqnarray}
    for any $D_{\mathcal{N},0}<D<\sigma_X^2$. 
Hence, the new upper bound is tight for Gaussian inputs, irrespective of $M$.  
\end{remark}
\subsection{Asymptotic Analysis of the Bounds}\label{Sec:asymptoticrateloss}
In this subsection, we provide an analysis similar to Section~\ref{exactrateloss}. 

\subsubsection{Fixed $D$, large $M$}

\begin{theorem}. As the number of agents increases, (\ref{tightub}) simplifies to the following. 
\begin{IEEEeqnarray}{rCl}
   L(D)&\leq& \frac{1}{2} \sigma_W^2 \left(\frac{1}{D}-\frac{1}{\sigma _X^2}\right)+\frac{1}{2}\log\frac{\sigma_X^2}{N(X)}+ \nonumber\\
     && \hspace*{-.35cm}{}\frac{\left(\frac{\sigma_W^2}{2}\left(\frac{1}{D}-\frac{1}{\sigma_X^2}\right)\right)^2-\frac{\sigma_W^2}{2}\left(\frac{1}{D}-J(X)\right)}{M}+O\left(\frac{1}{M^2}\right) \nonumber\\
     &=& L_{\mathcal{N}}^\infty(D)+\frac{1}{2}\log\frac{\sigma_X^2}{N(X)}\nonumber\\&&{}+\frac{\left(L_{\mathcal{N}}^\infty(D)\right)^2-L_{\mathcal{N}}^\infty(D)+\frac{\sigma_{W}^2}{2}\left(J(X)-\frac{1}{\sigma_X^2}\right)}{M}\nonumber\\&&{}+O\left(\frac{1}{M^2}\right)
\end{IEEEeqnarray}
and (\ref{lowerbnd1}) simplifies to 
\begin{IEEEeqnarray}{rCl}
   L(D)&\geq& \frac{1}{2} \sigma_W^2 \left(\frac{1}{D}-J(X)\right)-\frac{1}{2}\log\frac{\sigma_X^2}{N(X)}+ \nonumber\\
     && \frac{\left(\frac{\sigma_W^2}{2}\left(\frac{1}{D}-J(X)\right)\right)^2-\frac{\sigma_W^2}{2}\left(\frac{1}{D}-\frac{1}{\sigma_X^2}\right)}{M}+O\left(\frac{1}{M^2}\right) \nonumber\\&=&{} L_{\mathcal{N}}^\infty(D)-\frac{\sigma_W^2}{2}\left(J(X)-\frac{1}{\sigma_X^2}\right)-\frac{1}{2}\log\frac{\sigma_X^2}{N(X)}+\nonumber\\&& \frac{\left(\frac{\sigma_W^2}{2}\left(\frac{1}{D}-J(X)\right)\right)^2-\frac{\sigma_W^2}{2}\left(\frac{1}{D}-\frac{1}{\sigma_X^2}\right)}{M}+O\left(\frac{1}{M^2}\right)\nonumber\\&&
\end{IEEEeqnarray}
\end{theorem}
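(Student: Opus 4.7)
The plan is to expand the upper bound~(\ref{tightub}) and the lower bound~(\ref{lowerbnd1}) separately as formal power series in $1/M$. In both cases the crucial observation is that $Y(M)=X+\frac{1}{M}\sum_i W_i$ is just the source $X$ contaminated by additive Gaussian noise of variance $\sigma_W^2/M$, which vanishes as $M\to\infty$; all non-Gaussian corrections that appear are encoded either by the Fisher information $J(X)$ or by the entropy power $N(X)$.

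For the upper bound, I would first expand the Gaussian piece $\frac{M}{M-1}L_{\mathcal{N}}(D)$ using~(\ref{gaussianratelossinfinity}): multiplying through by $\frac{M}{M-1}=1+\frac{1}{M}+O(1/M^2)$ collapses the $L_{\mathcal{N}}^\infty(L_{\mathcal{N}}^\infty-1)$ coefficient down to $(L_{\mathcal{N}}^\infty)^2$, giving $L_{\mathcal{N}}^\infty(D)+(L_{\mathcal{N}}^\infty(D))^2/M+O(1/M^2)$. The remaining work is to expand the two correction terms $\frac{1}{2}\log\frac{\sigma_X^2}{N(V(M))}$ and $-\frac{1}{2}\log\frac{D}{D-\mmse{X\given Y(M)}}$. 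The small-noise inputs I need are: (i) Brown's identity (equivalently Hatsell--Nolte) $\mmse{X\given Y(M)}=\sigma_W^2/M-(\sigma_W^2/M)^2 J(Y(M))$, combined with $J(Y(M))=J(X)+O(1/M)$, yielding $\mmse{X\given Y(M)}=\sigma_W^2/M-\sigma_W^4 J(X)/M^2+O(1/M^3)$ and hence $\frac{1}{2}\log\frac{D}{D-\mmse{X\given Y(M)}}=\frac{\sigma_W^2}{2MD}+O(1/M^2)$; and (ii) de Bruijn integrated from $0$ to $\sigma_W^2/M$, giving $h(Y(M))=h(X)+\sigma_W^2 J(X)/(2M)+O(1/M^2)$.

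For $N(V(M))$ I would invoke Theorem~\ref{theorem-main} to write $h(V(M))=h(Y(M))+\E{\log\!\left(\frac{M}{\sigma_W^2}\Var{X\given Y(M)}\right)}$. Tweedie's formula yields $\Var{X\given Y(M)}=\frac{\sigma_W^2}{M}+\frac{\sigma_W^4}{M^2}(\log p_{Y(M)})''(Y(M))$, so a first-order expansion of $\log(1+x)$ combined with the standard identity $\E{(\log p_{Y(M)})''(Y(M))}=-J(Y(M))$ produces the clean cancellation $\E{\log\!\left(\frac{M}{\sigma_W^2}\Var{X\given Y(M)}\right)}=-\sigma_W^2 J(X)/M+O(1/M^2)$. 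Adding this to the expansion of $h(Y(M))$ gives $h(V(M))=h(X)-\sigma_W^2 J(X)/(2M)+O(1/M^2)$, and hence $\frac{1}{2}\log\frac{\sigma_X^2}{N(V(M))}=\frac{1}{2}\log\frac{\sigma_X^2}{N(X)}+\frac{\sigma_W^2 J(X)}{2M}+O(1/M^2)$. Collecting the three contributions and using the algebraic identity $-L_{\mathcal{N}}^\infty(D)+\frac{\sigma_W^2}{2}(J(X)-1/\sigma_X^2)=\frac{\sigma_W^2}{2}(J(X)-1/D)$ reassembles exactly the stated $1/M$ coefficient in the upper bound.

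For the lower bound only $N(Y(M))$ appears, and from the same de Bruijn step $N(Y(M))/N(X)=1+\sigma_W^2 J(X)/M+O(1/M^2)$. Substituting into the first logarithm of~(\ref{lowerbnd1}) with $x=\frac{\sigma_W^2}{M}(1/D-J(X))$ and expanding $\log\frac{1}{1-x}=x+x^2/2+O(x^3)$ before multiplying by $M/2$ produces the leading term $\frac{\sigma_W^2}{2}(1/D-J(X))$ together with the $1/M$ correction $\frac{1}{4}\sigma_W^4(1/D-J(X))^2$. The second logarithm in~(\ref{lowerbnd1}) Taylor-expands to $-L_{\mathcal{N}}^\infty(D)/M+O(1/M^2)$ and the constant $-\frac{1}{2}\log\frac{\sigma_X^2}{N(X)}$ persists, recovering the stated lower-bound expansion. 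The main obstacle is the \textbf{pointwise} (rather than merely in-expectation) control of $\Var{X\given Y(M)}$ required to interchange the Taylor expansion with $\E{\log(\cdot)}$; this is where the regularity assumptions mentioned in Section~\ref{kappaxlemma} are invoked, ensuring uniform integrability of the remainder in the Tweedie expansion and the rate $J(Y(M))=J(X)+O(1/M)$. Once these assumptions are granted, the rest of the argument reduces to careful Taylor expansion and bookkeeping.
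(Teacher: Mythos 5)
Your expansion is correct and arrives at the stated $1/M$ coefficients, but you handle the two non-Gaussian terms differently from the paper. The paper's route is to \emph{relax first, expand second}: it replaces $N(V(M))$ by its lower bound $N^2(X)/N(Y(M))$ from Inequality~(\ref{mainresult}), replaces $\mmse{X\given Y(M)}$ by the entropy-power-type lower bound of Eswaran--Gastpar (their Eqn.~16), and bounds $N(Y(M))$ from above via their Eqn.~98; after these one-sided substitutions the only small-noise asymptotic needed is $N(Y(M)) = N(X)\bigl(1+\sigma_W^2 J(X)/M\bigr)+O(1/M^2)$, which is exactly the appendix lemma $\kappa_X = N(X)J(X)$ obtained from de Bruijn. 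You instead compute $N(V(M))$ and $\mmse{X\given Y(M)}$ \emph{exactly} to first order, via the identity~(\ref{hV}) combined with Tweedie/Hatsell--Nolte and the Brown-type identity $\mmse{X\given Y}=\sigma_N^2-\sigma_N^4 J(Y)$. Since the paper's one-sided bounds happen to be tight to order $1/M$, both routes produce identical coefficients. What each buys: your version shows the relaxations cost nothing at this order and makes the cancellations transparent, but it genuinely requires the stronger regularity you flag at the end (uniform integrability of the remainder in $\E{\log\bigl(\tfrac{M}{\sigma_W^2}\Var{X\given Y(M)}\bigr)}$ and the rate $J(Y(M))=J(X)+O(1/M)$); the paper's version preserves valid inequalities throughout without ever interchanging a Taylor expansion with an expectation inside a conditional variance, needing only the de Bruijn expansion of $N(Y(M))$, so it gets away with weaker hypotheses. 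Your leading- and first-order bookkeeping (the collapse of $\tfrac{M}{M-1}L_{\mathcal{N}}(D)$ to $L_{\mathcal{N}}^\infty+(L_{\mathcal{N}}^\infty)^2/M$, and the identity $-L_{\mathcal{N}}^\infty(D)+\tfrac{\sigma_W^2}{2}(J(X)-1/\sigma_X^2)=\tfrac{\sigma_W^2}{2}(J(X)-1/D)$) checks out.
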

\begin{proof}
Both inequalities follow from relaxing the bounds and expanding them at $M=\infty$. By (\ref{mainresult}), $N(V(M))\geq \frac{N^2(X)}{N(Y(M))}$. We refer to Eqn. 16 in \cite{eswaran_gastpar_2019} for a lower bound on $\mmse{X\given Y(M)}$, Eqn. 98 for an upper bound of $N(Y(M))$, and Appendix~\ref{kappacalc} for the simplification of the bound. 
\end{proof}

\subsubsection{$D=\beta\,M^{-\alpha}$ for $0<\alpha\leq 1$, large $M$}
In this case, non-Gaussian inputs are not much different than Gaussian. That is, the same asymptotic equivalence (\ref{asympequivalence}) is observed. 
\begin{theorem}.
For arbitrary inputs with finite variance and  entropy power, the rate loss is asymptotically equivalent to $\gamma M^\alpha$, i.e., 
\begin{equation}
    L(D) \sim \gamma  M^\alpha \label{asympequivalence2}
\end{equation} for large $M$ where 
\begin{equation}
    \gamma = \begin{cases}
    \frac{\sigma _W^2}{2\beta} & \text{ if } 0<\alpha<1,\\
    \frac{1}{2} \log \left(\frac{\beta }{\beta -\sigma _W^2}\right) & \text{ if } \alpha=1, \beta\neq \sigma_W^2.
    \end{cases}
\end{equation}
\end{theorem}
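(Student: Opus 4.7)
The plan is to sandwich $L(D)$ between the upper bound of Theorem~\ref{thmtight} and the lower bound (\ref{lowerbnd1}), substitute $D=\beta M^{-\alpha}$, and show that each side reduces to $\gamma M^\alpha + O(1)$ with the $\gamma$ named in the statement. Since $\gamma M^\alpha\to\infty$ for $\alpha>0$, the squeeze then yields $L(D)/(\gamma M^\alpha)\to 1$. The Gaussian asymptotic (\ref{asympequivalence}) supplies the correct leading constant automatically, so the real work lies in verifying that the non-Gaussian correction terms are subdominant.

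For the upper bound, I rewrite (\ref{tightub}) as $\tfrac{M}{M-1}L_{\mathcal{N}}(D)+\tfrac{1}{2}\log\tfrac{\sigma_X^2}{N(V(M))}-\tfrac{1}{2}\log\tfrac{D}{D-\mmse{X\given Y(M)}}$. By (\ref{asympequivalence}) the first summand already behaves like $\gamma M^\alpha$. Since $Y(M)=X+\tfrac{1}{M}\sum_i W_i$ converges to $X$ in $L^2$, one has $V(M)\to X$ and $N(V(M))\to N(X)$, so the second summand is $O(1)$. For the third, I use $\mmse{X\given Y(M)}=O(1/M)$: in the regime $0<\alpha<1$, the ratio $\mmse{X\given Y(M)}/D$ tends to $0$ and the log vanishes, while at $\alpha=1$ the ratio tends to a constant in $[0,1)$ and the log tends to a finite limit. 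Hence the full upper bound is $\gamma M^\alpha+O(1)$.

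For the lower bound (\ref{lowerbnd1}), set $b_M := N(Y(M))/N(X)-\sigma_W^2/(MD)$. The additive term not scaled by $M/2$ is $O(1)$ under both choices of $\alpha$. For the main term $\tfrac{M}{2}\log(1/b_M)$, I apply de Bruijn's identity to the representation $Y(M)=X+Z_M$ with $Z_M\sim\mathcal{N}(0,\sigma_W^2/M)$ to obtain
\begin{equation}
  N(Y(M))=N(X)\bigl(1+J(X)\tfrac{\sigma_W^2}{M}+O(1/M^2)\bigr),
\end{equation}
so that $b_M=1-\tfrac{\sigma_W^2}{M}\bigl(\tfrac{1}{D}-J(X)\bigr)+O(1/M^2)$. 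For $0<\alpha<1$, the term $1/D=M^\alpha/\beta$ dominates $J(X)$ and one gets $\tfrac{M}{2}\log(1/b_M)\sim\tfrac{\sigma_W^2}{2\beta}M^\alpha$; for $\alpha=1$, $b_M\to 1-\sigma_W^2/\beta$ and $\tfrac{M}{2}\log(1/b_M)\sim\tfrac{M}{2}\log\tfrac{\beta}{\beta-\sigma_W^2}$. In both cases the lower bound equals $\gamma M^\alpha+O(1)$, matching the upper bound.

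The main obstacle will be to rigorously handle the $O(1/M^2)$ remainders: when multiplied by $M/2$ they are a priori $O(1/M)$, which is subdominant against $\gamma M^\alpha$, but at $\alpha=1$ even $o(1)$ control on the ratio $L(D)/(\gamma M)$ is needed, so one must invoke enough regularity of the source $X$ (in particular finite Fisher information, consistent with the hypotheses in the $M\uparrow\infty$ expansion of the preceding theorem) to justify both the de Bruijn expansion of $N(Y(M))$ and the $\mmse{X\given Y(M)}=O(1/M)$ estimate uniformly in the relevant range of $D$.
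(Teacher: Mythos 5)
Your proposal is correct and follows essentially the same route as the paper: both sandwich $L(D)$ between the upper bound (\ref{tightub}) and the lower bound (\ref{lowerbnd1}), isolate the dominating $\frac{M}{2}\log(\cdot)$ term on each side, control the remaining terms as $O(1)$, and expand $N(Y(M))$ via $J(X)$ (your de Bruijn expansion is exactly what the cited Eqn.~98 of \cite{eswaran_gastpar_2019} supplies, as a one-sided bound that suffices here). Your treatment is somewhat more explicit about the error terms and validity ranges than the paper's sketch, but the underlying argument is the same.
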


\begin{proof}
The dominating term on the right hand side of (\ref{lowerbnd1}) is $-\frac{M}{2}\loge {{\frac{N(Y(M))}{N(X)}-\frac{\sigma_W^2}{M}\frac{M^\alpha}{\beta}}}$. By Eqn. 98 in \cite{eswaran_gastpar_2019}, this can be further relaxed into $\frac{M}{2}\loge{1-\frac{\sigma_W^2}{\beta}M^{\alpha-1}+\sigma_W^2 J(X)M^{-1}}$. Similarly, the dominating term on the right hand side of (\ref{tightub}) is $\frac{M}{2}\loge{1-\frac{\sigma_W^2}{\beta}M^{\alpha-1}+\frac{\sigma_W^2}{\sigma_X^2}M^{-1}}$. Taking the limit as $M\uparrow\infty$ of both the upper and lower bounds conclude the proof.
\end{proof}

\subsubsection{Fixed $M$, $D=D_{\mathcal{N},0}+\delta$ for small $\delta$}
For this case, the upper bound for non-Gaussian inputs exhibits the same behavior as Gaussian inputs.
\begin{theorem}. For arbitrary inputs with finite variance and  entropy power, as $D$ gets closer to  $D_{\mathcal{N},0}$ by $\delta$,
\begin{align}
    L(D)\leq g(\delta) 
\end{align}
where $g(\delta)\sim \frac{M}{2}\log\frac{1}{\delta}$. That is, \begin{equation}
    \lim_{\delta\downarrow 0} \frac{g(\delta)}{\frac{M}{2}\log{\frac{1}{\delta}}} = 1
\end{equation}
\end{theorem}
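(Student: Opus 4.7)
The plan is to start from Theorem~\ref{thmtight}, which already provides the desired upper bound. I set
\[
g(\delta) \triangleq \frac{1}{2}\log\frac{\sigma_X^2}{N(V(M))}+\frac{M}{2}\log\frac{1}{1-\frac{\sigma_W^2}{M}\left(\frac{1}{D_{\mathcal{N},0}+\delta}-\frac{1}{\sigma_X^2}\right)}-\frac{1}{2}\log\frac{D_{\mathcal{N},0}+\delta}{(D_{\mathcal{N},0}+\delta)-\mmse{X\given Y(M)}},
\]
so that $L(D)\leq g(\delta)$ is immediate. The task then reduces to a purely analytic expansion: establish $\lim_{\delta\downarrow 0} g(\delta)\Big/\bigl(\tfrac{M}{2}\log\tfrac{1}{\delta}\bigr)=1$.

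Next I would analyse the three summands in $g(\delta)$ separately. The first term is a constant independent of $\delta$. For the middle term, a short algebraic check shows that $D_{\mathcal{N},0}$ is precisely the root of $h(D)\triangleq 1-\frac{\sigma_W^2}{M}(\tfrac{1}{D}-\tfrac{1}{\sigma_X^2})$; indeed, substituting $D=D_{\mathcal{N},0}=\frac{\sigma_X^2\sigma_W^2}{M\sigma_X^2+\sigma_W^2}$ gives $h(D_{\mathcal{N},0})=0$. A first-order Taylor expansion then yields $h(D_{\mathcal{N},0}+\delta)=\frac{\sigma_W^2}{M D_{\mathcal{N},0}^2}\,\delta+O(\delta^2)$, so the middle term becomes $\frac{M}{2}\log\frac{1}{\delta}+\frac{M}{2}\log\frac{MD_{\mathcal{N},0}^2}{\sigma_W^2}+O(\delta)$. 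This is the source of the advertised scaling.

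For the third term, the key observation is that $D_{\mathcal{N},0}$ coincides with the \emph{linear} MMSE of $X$ from $Y(M)$, and since the conditional-mean estimator is at least as good as any linear estimator, $\mmse{X\given Y(M)}\leq D_{\mathcal{N},0}$. Hence the denominator $(D_{\mathcal{N},0}+\delta)-\mmse{X\given Y(M)}$ is bounded away from zero as $\delta\downarrow 0$, and the third term tends to the finite constant $-\tfrac{1}{2}\log\bigl(D_{\mathcal{N},0}/(D_{\mathcal{N},0}-\mmse{X\given Y(M)})\bigr)$. Summing the three contributions gives $g(\delta)=\tfrac{M}{2}\log\tfrac{1}{\delta}+C+o(1)$ for a constant $C$ depending on $X,M,\sigma_W^2$, from which the equivalence follows by dividing through.

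The main subtlety is the borderline situation in which the conditional mean is linear, most notably the Gaussian case, where $\mmse{X\given Y(M)}=D_{\mathcal{N},0}$ and the third term itself blows up like $-\tfrac{1}{2}\log\tfrac{1}{\delta}$. This is why the theorem hypothesises an ``arbitrary input'' in the generic sense of strict inequality $\mmse{X\given Y(M)}<D_{\mathcal{N},0}$; I would address this case explicitly by invoking the fact that equality in MMSE vs.\ LMMSE forces the posterior mean to be affine in $Y(M)$, a degenerate regime already covered by Proposition~\ref{GaussianInput} (which still gives an upper bound of order $\tfrac{M-1}{2}\log\tfrac{1}{\delta}$, hence $\leq g(\delta)$). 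No other step requires more than routine calculus.
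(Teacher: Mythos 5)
Your proof is correct and follows essentially the same route as the paper's: both start from the upper bound (\ref{tightub}), identify the middle term $\frac{M}{2}\log\frac{1}{1-\frac{\sigma_W^2}{M}\left(\frac{1}{D}-\frac{1}{\sigma_X^2}\right)}$ as the source of the $\frac{M}{2}\log\frac{1}{\delta}$ scaling (the paper does this by invoking the Gaussian expansion (\ref{Gaussratelosscase3}) together with the identity involving $\frac{M}{M-1}L_{\mathcal{N}}(D)$, while you perform a direct first-order Taylor expansion of $1-\frac{\sigma_W^2}{M}\left(\frac{1}{D}-\frac{1}{\sigma_X^2}\right)$ about its root $D_{\mathcal{N},0}$), and then argue that the remaining terms are $O(1)$. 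Where you go beyond the paper is in actually justifying that $O(1)$ claim: the paper simply asserts that the last two terms of (\ref{tightub}) are $O(1)$, whereas you note that this requires $\mmse{X\given Y(M)}<D_{\mathcal{N},0}$, i.e.\ that the MMSE lies strictly below the linear MMSE, and you correctly flag that this fails exactly when the conditional mean is affine (the Gaussian case), where the third term contributes an additional $-\frac{1}{2}\log\frac{1}{\delta}$ and the right-hand side of (\ref{tightub}) is only of order $\frac{M-1}{2}\log\frac{1}{\delta}$. Your explicit handling of that borderline case (falling back on Proposition~\ref{GaussianInput}, or simply enlarging $g$ so the stated asymptotic equivalence survives) closes a small gap that the paper's one-line proof leaves open.
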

\begin{proof}
Follows immediately by (\ref{Gaussratelosscase3}) and (\ref{tightub}) as the last two terms on (\ref{tightub}) are $O(1)$. 
\end{proof}
\subsection{Comparison and Numerical Results}\label{sec:comparison}
We also note that the following upper bound on the rate loss $L(D)$ appears in~\cite{dragotti2009distributed}. 
\begin{IEEEeqnarray}{rCl}
L(D) &\leq&\frac{M-1}{2} \log \frac{1}{1-\frac{\sigma_{W}^{2}}{M}\left(\frac{1}{D}-\frac{1}{\sigma_X^2}\right)}\nonumber\\
&& {}+\frac{1}{2} \loge{\scriptstyle 1+\left(\frac{1}{D}-\frac{1}{\sigma_X^2}\right)\left(\frac{D+2\sqrt{D}\sigma_W+\frac{\sigma_W^2}{M}}{1-\frac{\sigma_W^2}{M}\left(\frac{1}{D}-\frac{1}{\sigma_X^2}\right)}\right)} \label{prevupperbound}
\end{IEEEeqnarray}

and as $M\uparrow \infty$, we have 
\begin{multline}
L(D) \leq \frac{\sigma_{W}^{2}}{2}\left(\frac{1}{D}-\frac{1}{\sigma_{X}^{2}}\right)\\
+\frac{1}{2} \log \left(1+\left(\frac{1}{D}-\frac{1}{\sigma_{X}^{2}}\right)\left(D+2 \sqrt{D \sigma_{W}^{2}}\right)\right). \label{prevBound}
\end{multline}

Comparing (\ref{tightub}) and (\ref{prevupperbound}) for any input distributions is tedious. For Gaussian inputs, it is easy to see that the new bound achieves the exact value while (\ref{prevupperbound}) does not. 
\par For comparing the two bounds in the large $M$ regime, we set $\sigma_X^2=1, \sigma_W^2=1/\text{snr}$ and solve for $D$ such that (\ref{prevBound}) is greater than  (\ref{newBound}), i.e., $D$ for which the new bound is strictly better. These regions are in the form $D<D^{*}$, and we plot $D^{*}$ vs $N(X)$ in Fig.~\ref{fig:comparison} for different snr values.  
\begin{figure}[htb]
    \centering
    \includegraphics[width = \linewidth]{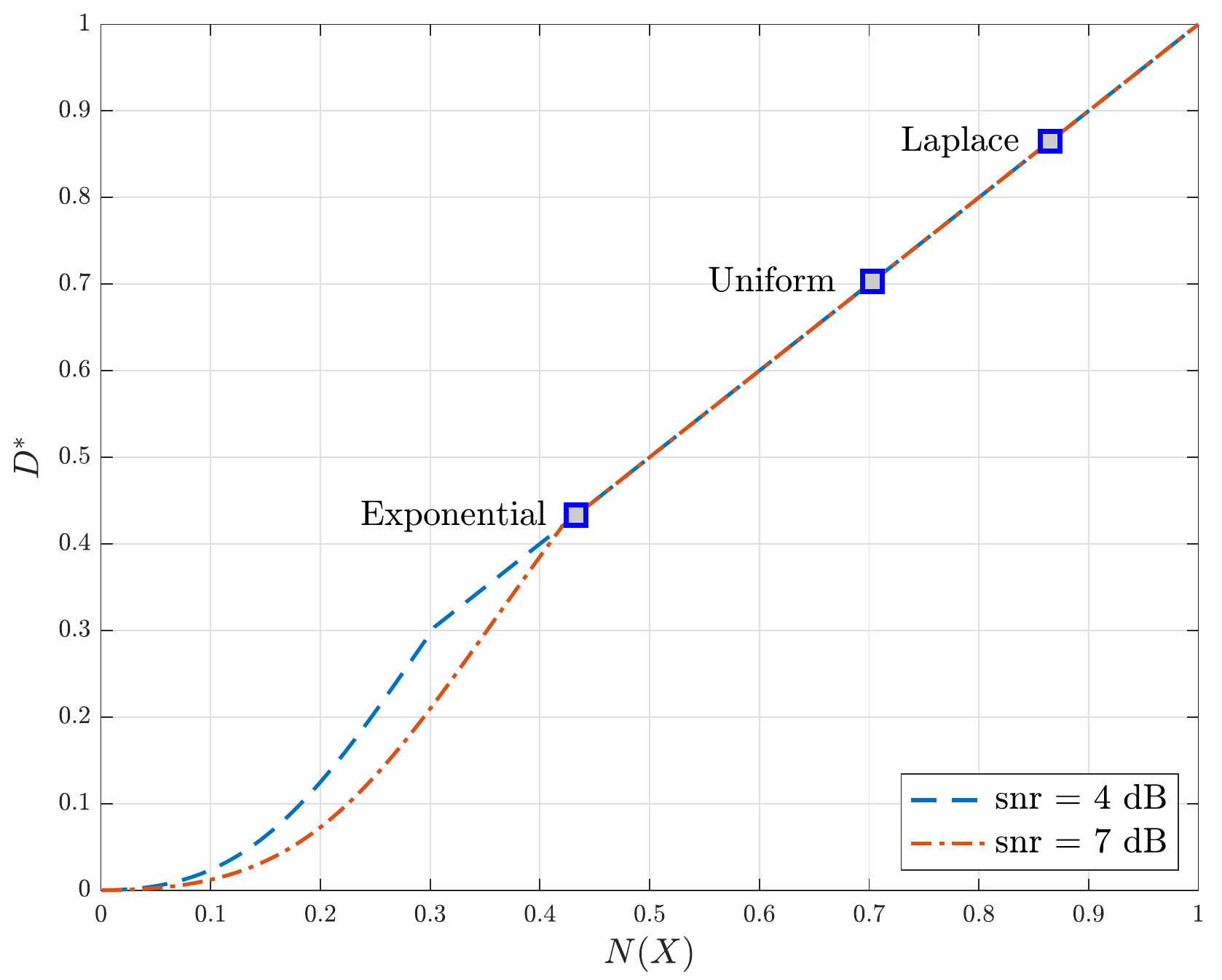}
    \caption{Comparison of the bounds (\ref{newBound}) and (\ref{prevBound}): $D^*$ vs $N(X).$ The new bound is valid for $0<D<N(X)$ and is better whenever $D<D^*.$ Note that $D^*$ is calculated numerically.}
    \label{fig:comparison}
\end{figure}
For other distributions such as Laplace and Uniform, we present numerical results in Fig.~\ref{fig1}, and Fig.~\ref{fig2} for $M\uparrow \infty$, and in Fig.~\ref{fig3} for $M=10$. 
    %\\\textcolor{Mahogany}{We can try to compare the bounds generally, and/or asymptotically, and/or for the Gaussian input. An immediate observation is that the new bound (\ref{newBound}) is asymptotically tight for the Gaussian input, whereas the previous one (\ref{prevBound}) is not. It is also possible to add figures to illustrate the comparison, e.g., as a function of $M$, as a function of variances and for different input distributions.}

\begin{figure}[htb]
    \centering
    \includegraphics[width = \linewidth]{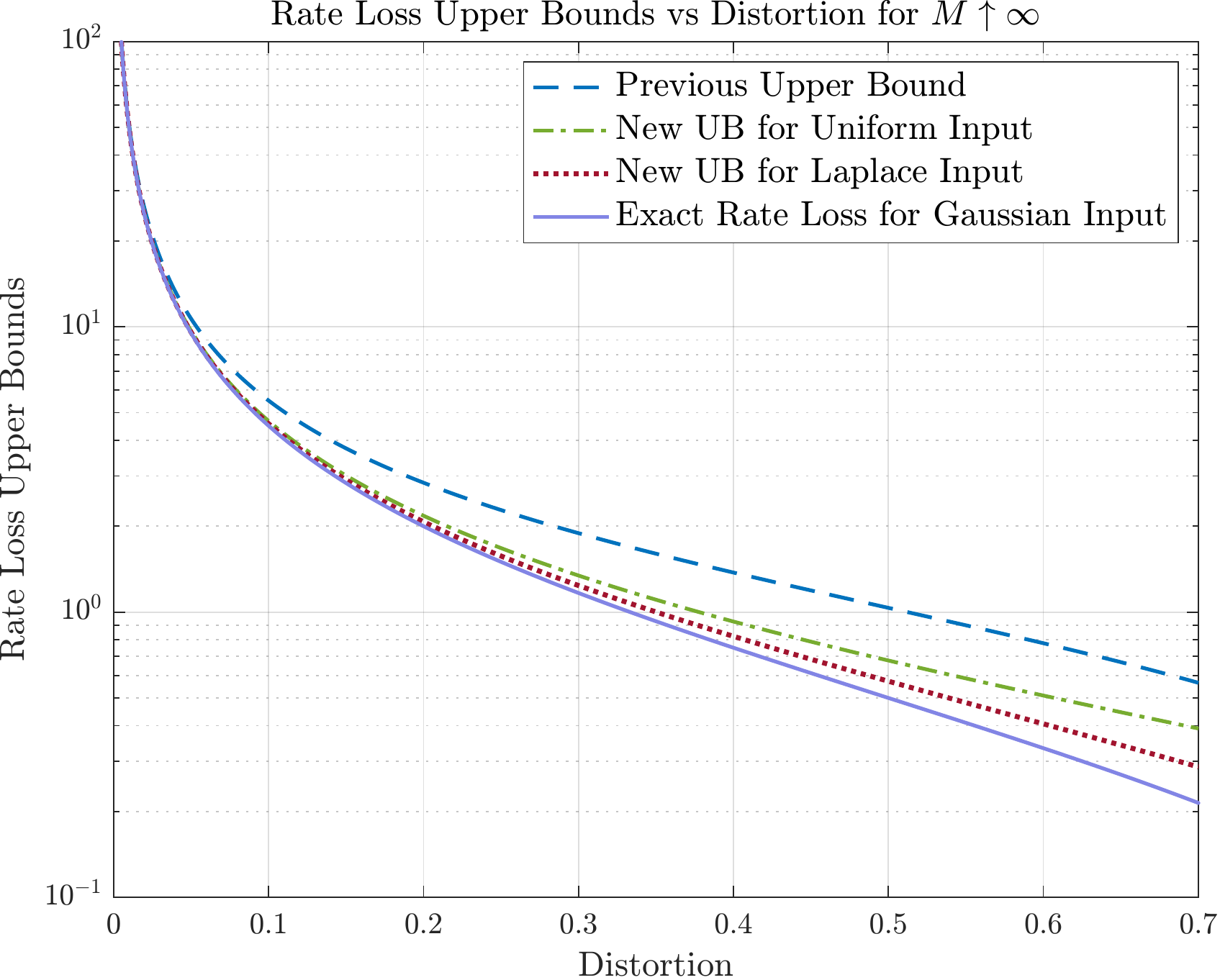}
    \caption{Comparison of Upper Bounds for $\sigma_X^2=\sigma_W^2=1, M\uparrow\infty$: Previous upper bound refers to (\ref{prevBound}), new upper bound refers to (\ref{newBound}), and exact rate loss for Gaussian input refers to $L_{\mathcal{N}}^\infty(D)$ in (\ref{gaussianratelossinfinity}). Calculations are done analytically.}
    \label{fig1}
\end{figure}

\begin{figure}[htb]
    \centering
    \includegraphics[width = \linewidth]{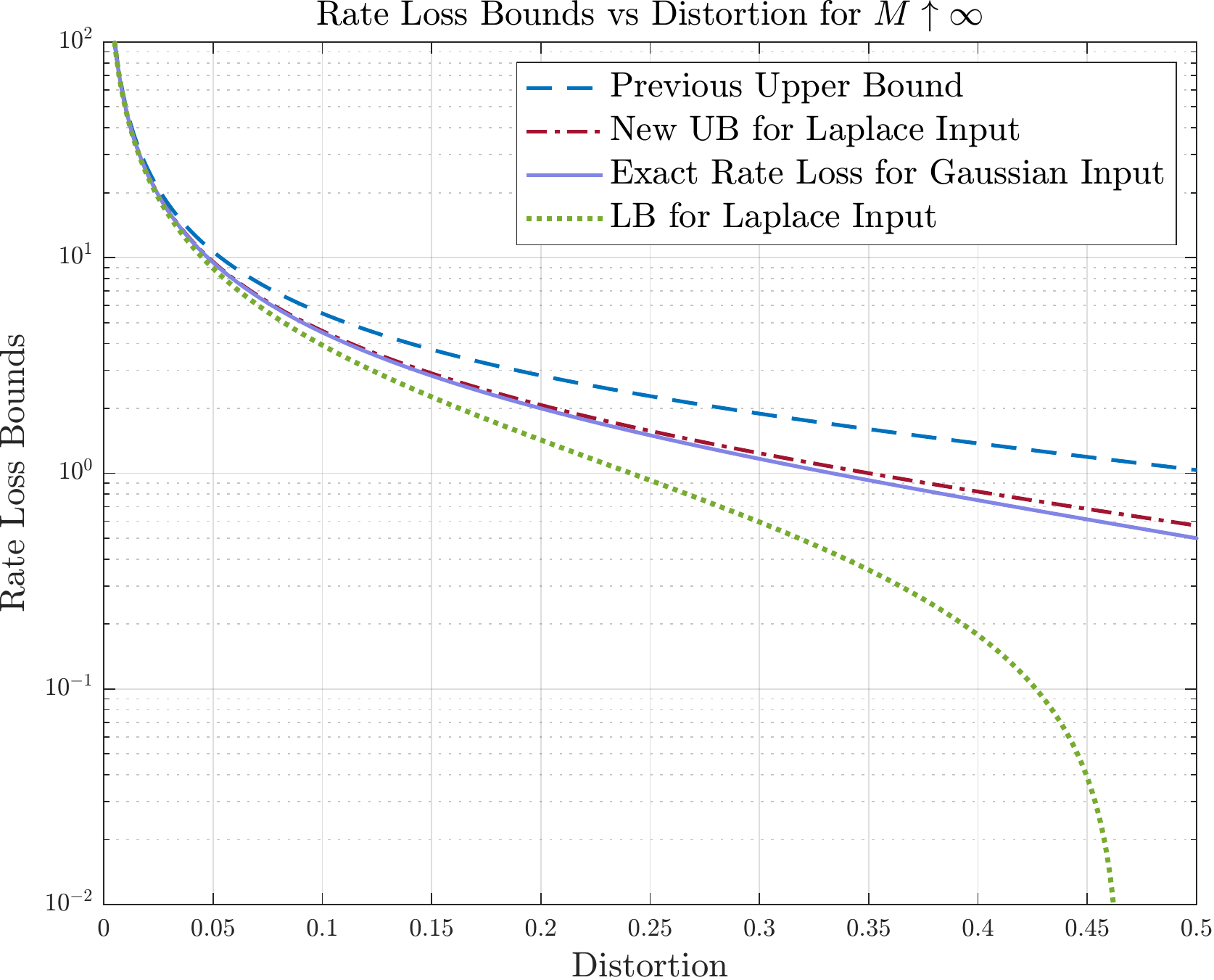}
    \caption{Comparison of Bounds for $\sigma_X^2=\sigma_W^2=1, M\uparrow\infty$: Previous upper bound refers to (\ref{prevBound}), new upper bound refers to (\ref{newBound}), exact rate loss for Gaussian input refers to $L_{\mathcal{N}}^\infty(D)$ in (\ref{gaussianratelossinfinity}), and lower bound refers to (\ref{Eqn-rateloss-lowerbound}). Calculations are done analytically.}
    \label{fig2}
\end{figure}

\begin{figure}[htb]
    \centering
    \includegraphics[width = \linewidth]{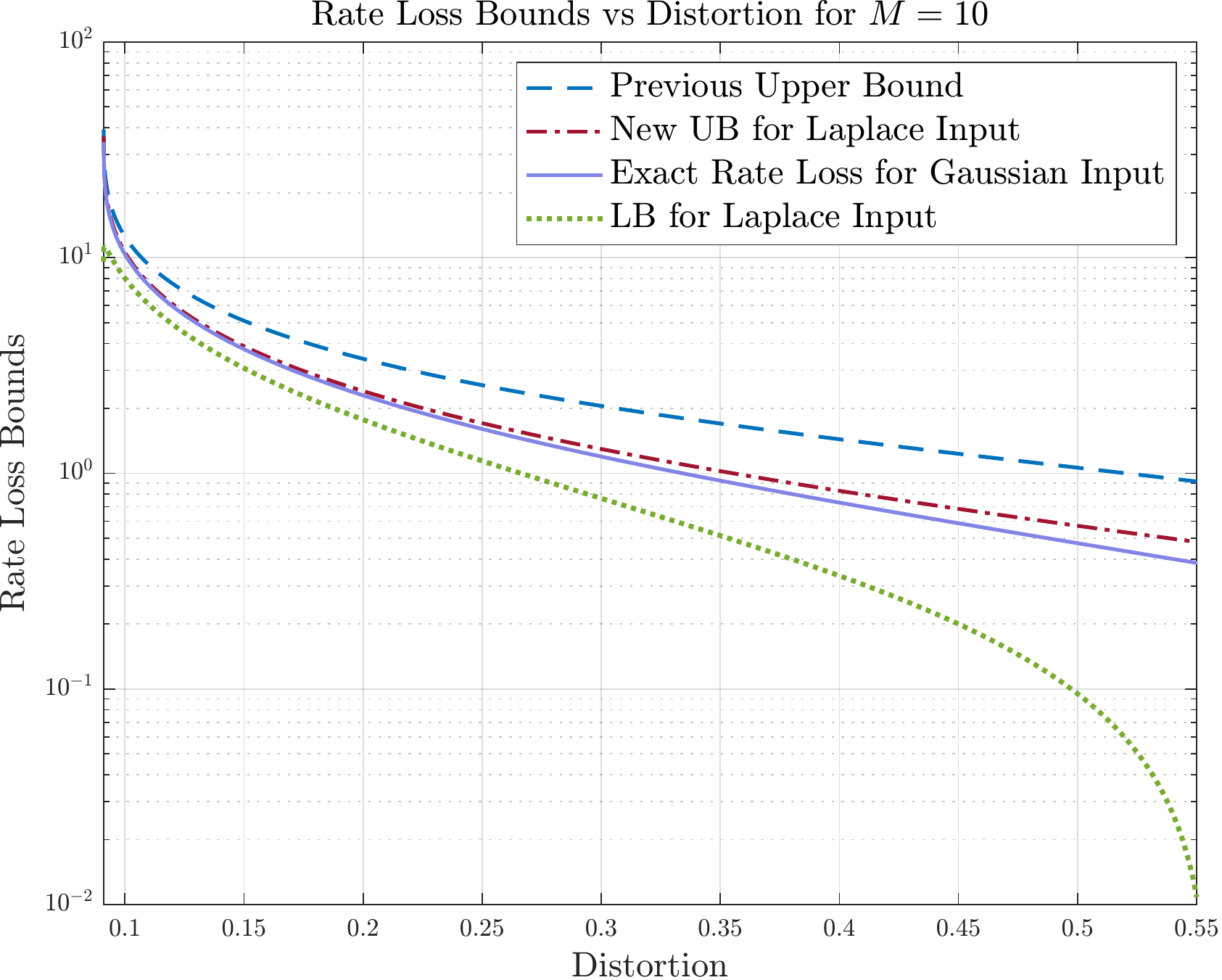}
    \caption{Comparison of Bounds for $\sigma_X^2=\sigma_W^2=1$, $M=10$: Previous upper bound refers to (\ref{prevupperbound}), new upper bound refers to (\ref{tightub}), exact rate loss for Gaussian input refers to (\ref{exactGauss}), and lower bound refers to (\ref{lowerbnd1}). Calculations are done numerically.}
    \label{fig3}
\end{figure}
%Note that the new upper bound is tighter than the general upper bound and its dependence on the number of users $M$ is more prominent at high distortions.
\begin{remark}:
It is important to note that the Gaussian input maximizes the lower bound (\ref{lowerbnd1}), whereas it minimizes the upper bound (\ref{tightub}). Hence, the bounds are tight for the inputs that are close to the Gaussian distribution in terms of KL-divergence.  \end{remark}

\section{Conclusion and Outlook}\label{conclusionsection}
We studied the well known rate loss where the underlying source is arbitrary but having a finite variance and differential entropy. We explicitly explored three limiting regimes: the number of users gets larger for fixed $D$, the number of users gets larger and $D$ is approaching zero with $1/D$ staying sub-linear, and small $D$ for fixed number of users. Our results indicate that the arbitrary input case is not so different from the Gaussian input analogue which is due to \cite{oohama1998rate}. An interesting continuation would be the study of the worst-case rate loss and its tight bounds.

 \section*{Acknowledgments}
 The work in this manuscript was supported in part by the Swiss National Science Foundation under Grant 200364.

\newpage

\begin{appendix}
\label{kappacalc}
\begin{lemma}:  Under regularity conditions,
\begin{align}
    \kappa_X &\triangleq \lim_{s\rightarrow 0^+}\frac{d}{ds} N(X+\sqrt{s}G)\\
    &  = N(X)\, J(X). \label{kappaxlemma}
\end{align}  
\end{lemma}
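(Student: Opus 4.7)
The plan is to reduce the derivative of the entropy power to the Fisher information of the noisy observation via de Bruijn's identity, and then let $s \downarrow 0$. Write $Z_s = X + \sqrt{s}\,G$ where $G \sim \mathcal{N}(0,1)$ is independent of $X$, so that $N(Z_s) = \e{2h(Z_s)}/(2\pi e)$.

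First I would differentiate in $s$ using the chain rule applied to the definition of entropy power:
\begin{equation}
    \frac{d}{ds} N(Z_s) \;=\; \frac{2\,\e{2h(Z_s)}}{2\pi e}\,\frac{d\,h(Z_s)}{ds} \;=\; 2\,N(Z_s)\,\frac{d\,h(Z_s)}{ds}.
\end{equation}
Next I would invoke de Bruijn's identity (in its standard form for addition of an independent Gaussian of variance $s$), which yields
\begin{equation}
    \frac{d\,h(Z_s)}{ds} \;=\; \frac{1}{2}\,J(Z_s),
\end{equation}
so that $\frac{d}{ds} N(Z_s) = N(Z_s)\,J(Z_s)$ for every $s>0$.

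Finally I would take $s\downarrow 0$. Since entropy power is continuous in $s$, $N(Z_s)\to N(X)$ is immediate. The substantive step is the continuity of Fisher information, $J(Z_s)\to J(X)$ as $s\downarrow 0$, together with the legitimacy of interchanging the limit in $s$ with the derivative on the left-hand side; this is exactly what the unstated ``regularity conditions'' are intended to guarantee (for example, sufficient smoothness of $p_X$ and dominated convergence of the score $\partial_x \log p_{Z_s}$). Combining the two limits gives $\kappa_X = N(X)\,J(X)$, as claimed.

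The main obstacle is not the algebraic manipulation, which is essentially a one-line consequence of de Bruijn's identity, but rather the analytic justification that both $J(Z_s)$ and the derivative $\frac{d}{ds}N(Z_s)$ admit the one-sided limit at $s=0$. I would handle this by assuming $X$ has a density $p_X$ for which $J(X)<\infty$ and the score function is sufficiently regular (e.g., $p_X$ absolutely continuous with $\int (p_X')^2/p_X < \infty$), which makes $s\mapsto J(X+\sqrt{s}G)$ continuous from the right at $0$ via standard convolution-smoothing arguments.
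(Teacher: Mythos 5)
Your proposal is correct and follows essentially the same route as the paper's proof: apply the chain rule to $N(Z_s)=\e{2h(Z_s)}/(2\pi e)$, invoke de Bruijn's identity to convert $\frac{d}{ds}h(Z_s)$ into $\frac{1}{2}J(Z_s)$, and pass to the limit $s\downarrow 0$. Your additional discussion of which regularity conditions justify $J(Z_s)\to J(X)$ is a welcome elaboration of what the paper leaves implicit, but the argument itself is the same.
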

\newpage
\begin{proof}
We start by observing that we can express
\begin{align}
    \kappa_X &= \lim_{s\rightarrow 0^+}\frac{d}{ds} N(X+\sqrt{s}G)\\
    &= \lim_{s\rightarrow 0^+} 2N(X+\sqrt{s}G)\,\frac{d}{ds} h(X+\sqrt{s}G)\\
    &= 2N(X)\lim_{s\rightarrow 0^+} \frac{d}{ds} h(X+\sqrt{s}G)\\
    &= 2N(X) \,\lim_{s\rightarrow 0^+}\frac{1}{2}\, J(X+\sqrt{s}G)\\
    &= N(X)\, J(X) ,
\end{align}
where we use de Bruijn’s Identity ~\cite[p.672]{CoverThomas06} in the last two lines. 
%which thus yields Equation~\eqref{Eqn-rateloss-lowerbound}.
\end{proof}
\end{appendix}
%\section*{Acknowledgment}
\bibliographystyle{IEEEtran}
\bibliography{biblio}
%\balance
%\nocite{*}
%\balance{\enlargethispage{-1.2cm}} 
%%%%%%
%% To balance the columns at the last page of the paper use this
%% command:
%%
%\enlargethispage{-1.2cm} 
%%
%% If the balancing should occur in the middle of the references, use
%% the following trigger:
%%
%\IEEEtriggeratref{4}
%%
%% which triggers a \newpage (i.e., new column) just before the given
%% reference number. Note that you need to adapt this if you modify
%% the paper.  The "triggered" command can be changed if desired:
%%
%\IEEEtriggercmd{\enlargethispage{-20cm}}
%%
%%%%%%

%%%%%%
%% References:
%% We recommend the usage of BibTeX:
%%
%\bibliographystyle{IEEEtran}
%\bibliography{definitions,bibliofile}
%%
%% where we here have assume the existence of the files
%% definitions.bib and bibliofile.bib.
%% BibTeX documentation can be obtained at:
%% http://www.ctan.org/tex-archive/biblio/bibtex/contrib/doc/
%%%%%%

%% Or you use manual references (pay attention to consistency and the
%% formatting style!):
\end{document}